\newcommand{\Riha}{\v R\'\i ha}
\theoremstyle{plain}
\newtheorem{thm}{Theorem}
\newtheorem{lemma}[thm]{Lemma}
\newtheorem{claim}[thm]{Claim}
\newtheorem{theorem}{Theorem}
\theoremstyle{plain}
\newtheorem{defn}{Definition}
\theoremstyle{remark}
\newcommand{\iks}{increasing $ L $-system }
\newcommand{\kf}{$ L $-flow }
\newcommand{\kc}{$ L $-cut }
\newcommand{\kbf}{$ L $-bounded flow }
\newcommand{\kbc}{$ L $-bounded cut }
\newcommand{\kfw}{$ L $-flow}
\newcommand{\kffw}{$ L $-flow $f$}
\newcommand{\kcw}{$ L $-cut}
\newcommand{\kbfw}{$ L $-bounded flow}
\newcommand{\kbcw}{$ L $-bounded cut}
\def\eps{\varepsilon}
\def\O{\mathcal{O}{}}
\def\cNP{\hbox{\rm \sffamily NP}}
\begin{document}

\title{\bf On Polynomial-Time Combinatorial Algorithms for Maximum $L$-Bounded Flow }
\author{
	Kate\v{r}ina Altmanov\' a, Petr Kolman, Jan Voborn\' \i k \\
	Department of Applied Mathematics\\
	Faculty of Mathematics and Physics\\
	Charles University, Prague
}

\maketitle

\begin{abstract}
Given a graph $G=(V,E)$ with two distinguished vertices $s,t\in V$ and an
integer $L$, an {\em $L$-bounded flow} is a flow between $s$ and $t$ that can
be decomposed into paths of length at most $L$.  In the {\em maximum
$L$-bounded flow problem} the task is to find a maximum $L$-bounded flow
between a given pair of vertices in the input graph.  

The problem can
be solved in polynomial time using linear programming.  However, as far as we
know, no polynomial-time combinatorial algorithm for the $L$-bounded flow is
known. 
The only attempt, that we are aware of, to describe a combinatorial algorithm
for the maximum $L$-bounded flow problem was done by Koubek and \v R\'\i ha in
1981.  Unfortunately, their paper contains substantional flaws and the
algorithm does not work; in the first part of this paper, we describe these
problems.  

In the second part of this paper we describe a combinatorial
algorithm based on the exponential length method that finds 
a $(1+\epsilon)$-approximation of the maximum $L$-bounded flow in 
time $\O(\eps^{-2}m^2L\log L)$ where $m$ is the number of edges in the
graph. 
Moreover, we show that this approach works even for the 
NP-hard generalization of the maximum $L$-bounded flow problem in which
each edge has a length.  
\end{abstract}

\section{Introduction}
Given a graph $G=(V,E)$ with two distinguished vertices $s,t\in V$ and an
integer $L$, an {\em $L$-bounded flow} is a flow between $s$ and $t$ that can
be decomposed into paths of length at most $L$.  In the {\em maximum
$L$-bounded flow problem} the task is to find a maximum $L$-bounded flow
between a given pair of vertices in the input graph.  
The $L$-bounded flow was first studied, as far as we know, in 1971 by Ad\'amek
and Koubek~\cite{adamek1971remarks}. In connection with telecommunication
networks, $L$-bounded flows in networks with unit edge lengths have been widely
studied and are known as \emph{hop-constrained} flows~\cite{BleyNeto}.

For networks with unit edge lengths (or, more generally, with polynomially
bounded edge lengths, with respect to the number of vertices), the problem can
be solved in polynomial time using linear programming.  Linear programming is a
very general tool that does not make use of special properties of the problem
at hand. This often leaves space for superior combinatorial algorithms that do
exploit the structure of the problem.  For example, maximum flow, matching,
minimum spanning tree or shortest path problems can all be described as linear
programs but there are many algorithms that outperform general linear
programming approaches.  However, as far as we know, no polynomial-time
combinatorial algorithm\footnote{Combinatorial in the sense that it does not
explicitly use linear programming methods or methods from linear algebra or
convex geometry.} for the $L$-bounded flow is known.

\subsection{Related results}

For clarity we review the definitions of a few more terms that are used in this
paper.  
A {\em network} is a quintuple $G = (X, R, c, s, t) $, where $G = (X, R)$ is
a directed graph, $X$ denotes the set of vertices, $R$ the set of edges, $c$ is
the edge capacity function $c: R \rightarrow \mathbb{R}^+$, $s$ and $t$ are two
distinguished vertices called the source and the sink. We use $m$ and $n$ to denote the
number of edges and the number of vertices, respectively, in the network $G$, that is, 
$m=|R|$ and $n=|X|$.
Given an $L$-bounded flow $f$, we denote by $|f|$ the size of the flow, and for
an edge $e\in R$, we denote by $f(e)$ the total amount of flow $f$ through the
edge $e$.

An {\em $L$-bounded flow problem with edge lengths} is a generalization of the
$L$-bounded flow problem: each edge has also an integer length and the length
of a path is computed not with respect to the number of edges on it but with
respect the sum of lengths of edges on it.

Given a network $G$ and an integer parameter~$L$, an {\em $L$-bounded cut} is a
subset $C$ of edges $R$ in $G$ such that there is no path from $s$ to $t$  of
length at most $L$ in the network ${G = (X, R \setminus C, c, s, t)}$.  The
objective is to find an $L$-bounded cut of minimum size.  We sometimes
abbreviate the phrase \kbc to {\kcw} and, similarly, we abbreviate the phrase
\kbf to \kfw . 

Although the problems of finding an \kf and an \kc are easy to define
and they have been studied since the 1970's, still some fundamental open
problems remain unsolved. Here we briefly survey the main known results.

\paragraph{L-bounded flows}

As far as we know, the $L$-bounded flow was first considered in 1971 by Ad\'amek and
Koubek~\cite{adamek1971remarks}.  They published a paper introducing the \kbfw
s and cuts and describing some interesting properties of them. Among other
results, they show that, in contrast to the ordinary flows and cuts, the
duality between the maximum \kf and the minimum $L$-cut does not hold.

The maximum \kf can be computed in polynomial time using linear
programming~\cite{baier2010length,kolman2006improved,baier2010length,mahjoub2010max}.
The only attempt, that we are aware of, to describe a combinatorial algorithm
for the maximum $L$-bounded flow problem was done by Koubek and \v R\'\i ha in
1981~\cite{koubek1981maximum}.  
The authors say the algorithm finds a
maximum \kf in time $ O(m \cdot |I|^{2}\cdot S/\psi(G)) $, where $I$ denotes
the set of paths in the constructed $L$-flow, $S$~is the size of the maximum
\kfw, and $ \psi(G) = \min( |c(e) - c(g)|: c(e) \neq c(g), e, g \in R \cup
\{e'\} ) $, where $ c(e') = 0 $. 
Unfortunately, their paper contains
substantional flaws and the algorithm does not work as we show in the first
part of this paper. Thus, it is a challenging problem to find a polynomial time
combinatorial algorithm for the maximum $L$-bounded flow. 

Surprisingly, the maximum $L$-bounded flow problem with edge lengths is
NP-hard~\cite{baier2010length} even in outer-planar graphs.
Baier~\cite{baier2003flows} describes a FPTAS for the maximum $L$-bounded flow
with edge lengths that is based on the ellipsoid algorithm.  He also shows that
the problem of finding a decomposition of a given $L$-bounded flow into paths
of length at most $L$ is NP-hard, again even if the graph is outer-planar.

A related problem is that of $L$-bounded disjoint paths: the task is to find
the maximum number of vertex or edge disjoint paths, between a given pair
of vertices, each of length at most $L$. The vertex version of the problem is 
known to be solvable in polynomial time for $L\leq 4$ and NP-hard for 
$L\geq 5$~\cite{itai1982complexity}, and the edge version is solvable 
in polynomial time for $L\leq 5$ and NP-hard for $L\geq 6$~\cite{Bley:03}.

\paragraph{L-bounded cuts}

The \kbc problem is NP-hard
\cite{Schieber:1995:CFM:241577}.  Baier et al.~\cite{baier2010length} show that
it is NP-hard to approximate it by a factor of
$1.377$ for $L\geq 5$ in the case of the vertex $L$-cut, and for $L\geq
4$ in the case of the edge $L$-cut. Assuming the Unique Games
Conjecture, Lee at al.~\cite{Lee2017ImprovedHF} proved that the minimum \kbcw \
problem is NP-hard to approximate within any constant factor. For planar
graphs, the problem is known to be NP-hard~\cite{FHNN:15,ZFMN:17}, too.

The best approximations that we are aware of are by Baier et
al.~\cite{baier2010length}: they describe an algorithm with an
$\mathcal{O}(\min\{L, n/L\}) \subseteq \mathcal{O}(\sqrt{n})$-approximation for
the $L$-bounded vertex cut, and 
$\mathcal{O}(\min\{L, n^{2}/L^{2}, \sqrt{m}\}) \subseteq
\mathcal{O}(n^{2/3})$-approximation for the $L$-bounded edge cut.
The approximation factors are closely related with the cut-flow gaps:
there are instances where the minimum edge 
$L$-cut (vertex $L$-cut) is $\Theta(n^{2/3})$-times
($\Theta(\sqrt{n})$-times) bigger than the maximum \kfw~\cite{baier2010length}. 
For the vertex version of the problem, there is a $\tau$-approximation
algorithm for graphs of treewidth $\tau$~\cite{Kolman2018OnAE}.

The \kbc was also studied from the perspective of parameterized complexity.
It is fixed parameter tractable (FPT) with respect to the treewidth of the
underlying graph~\cite{Dvork2015ParametrizedCO,Kolman2018OnAE}.
Golovach and Thilikos~\cite{Golovach2009PathsOB} consider several
parameterizations and show FPT-algorithms for many variants of the problem
(directed/undirected graphs, edge/vertex cuts). 
On planar graphs, it is FPT with respect to the length bound 
$L$~\cite{Kolman2018OnAE}.

The \kbc appears in the literature also as the short paths interdiction
problem~\cite{Bazgan2018AMF}, \cite{Kolman2018OnAE}, \cite{Lee2017ImprovedHF}
or as the most vital edges for shortest paths~\cite{Bazgan2018AMF}.  

\subsection{Our contributions}
In the first part of the paper, we show that the combinatorial algorithm by
Koubek and {\Riha}~\cite{koubek1981maximum} for the maximum $L$-bounded flow is
not correct.

In the second part of the paper we describe an iterative combinatorial 
algorithm, based on the exponential length method,
that finds a $(1+\epsilon)$-approximation of the maximum $L$-bounded
flow in time $\O(\eps^{-2}m^2L\log L)$
; that is, we describe a fully polynomial approximation scheme (FPTAS)
for the problem. 

Moreover, we show that this approach works even for the
NP-hard generalization of the maximum $L$-bounded flow problem in which
each edge has a length.
This approach is more efficient than the FPTAS based on the ellipsoid
method~\cite{baier2003flows}.  

Our result is not surprising (e.g., Baier~\cite{baier2003flows} 
mentions the possibility, without giving the details, to use the exponential
length method to obtain a FPTAS for the problem); however, considering the
absence of other polynomial time algorithms for the problem that are not based
on the general LP algorithms, despite of the effort to find some, we regard it
as a meaningful contribution.  The paper is based on the results in the
bachelor's thesis of Kate\v{r}ina Altmanov\' a~\cite{Altmanova} and in the
master's thesis of Jan Voborn\' \i k~\cite{Vobornik}.

\section{The algorithm of Koubek and~\v{R}\'{i}ha}

\subsection{Increasing an $L$-bounded flow}

Before describing the problem with the algorithm by Koubek and 
{\Riha}~\cite{koubek1981maximum}, we informally describe the purpose and the
main attributes of {\em an increasing $L$-system}, a key structure used in the
algorithm. 

Consider a network $G=(X, R, c, s, t)$ and an arbitrary \kbf  $f$ from $s$ to $t$ in $G$,
together with its decomposition into paths of length at most $L$ (say paths
$p_1, p_2, \ldots $ carrying $r_1, r_2, \ldots $ units of flow, resp.)
that is not a~maximum \kbfw . Given $G$ and $f$, Koubek
and \v{R}\'{i}ha \cite{koubek1981maximum} build a labeled oriented tree
$T=(V,E,v_{0},LABV,LABE)$  where $V$ is the set of nodes, $E$ is the set of
edges, $ v_{0} $ is the root, $LABV$ is a vertex labelling and $LABE$ is an
edge labeling.  The tree is called {\em an increasing $L$ system with respect
to~$f$}.

There are four types of the nodes of the tree $T$; to explain the error in the paper,
it is sufficient to deal with three of them: $1$-son, $3$-son, $4$-son. 
With (almost) each node $u$ in $T$, are associated two consecutive paths in $G$:
the first one, denoted by $q(u)$, contains only
edges that are not used by the current \kffw, and the second one, denoted by
$\bar q(u)$, coincides with a subpath of some path from the current \kffw.
 \ (Fig.~\ref{fig:concatenation}).
\begin{figure}[h] 
	\centering
        \includegraphics[scale =1]{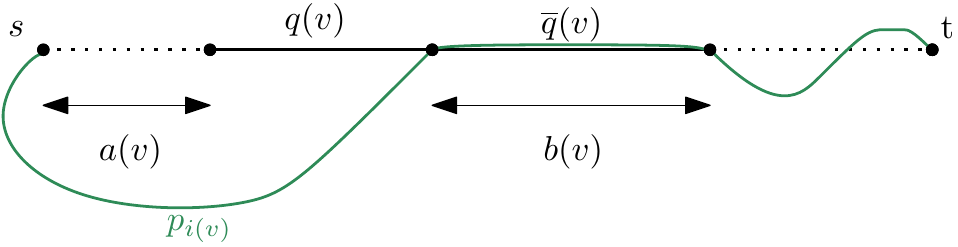}
        \caption{The concatenation of the section $ q(v) $ to $ \overline{q}(v) $.
		\label{fig:concatenation} }
\end{figure}
The tree $T$ encodes a combination of these paths with paths in $f$ and this
combination is supposed to yield  a larger $L$-flow than the \kffw.

The label of a vertex $v$ in the tree $T$, denoted by LABV in the original
paper, and the label of the edge $e$ connecting $v$ to its immediate ancestor,
if there is one, denoted by LABE, are of the following form:

\hspace{2cm}
\begin{tabular}{|l|l|l|}
\hline
 & LABV & LABE\\ \hline
$1$-son & ${(q(v),i(v),a(v),b(v))}$  & none \\ \hline
$3$-son or $4$-son & ${(q(v),i(v),a(v),b(v))}$  & $(h(e),j(e),d(e),o(e))$ \\ \hline
\end{tabular}
\\
where
\begin{itemize}
\setlength\itemsep{0mm}
\item $q(v)$ is a path in $G$ that is edge disjoint with every path in the $L$-flow $f$,
\item $i(v), j(v)$ are indices of paths in the $L$-flow $f$,
\item $a(v), b(v), d(e)$ are positive integers (distances),
\item $o(e)$ is a positive integer, if $v$ is a $3$-son, and $o(v)$ is a pointer to 
a $3$-son, if $v$ is a $4$-son,
\item $h(e)$ is a subset of edges in $G$.
\end{itemize}

As for every node $v$ in the tree (except for the root) there is a unique edge
$e$ connecting it to its parent, Koubek and {\Riha} often refer to the label of 
the edge $e$, and to its attributes, by the name of the vertex $v$, e.g., they
write $h(v)$ instead of $h(e)$; we shall use the same convention.

The tree $T$ is supposed to describe an $L$-flow $f'$ derived from $f$. 
In
particular, each path $q(v)$ and $\bar q(v)$ is a subpath of a new path
between $s$ and $t$ of length at most $L$. 
Very roughly speaking, the
attributes $a(v)$ and $d(v)$ store information about the distance of the path
segments $q(v)$ and $\bar q(v)$ from $s$ along the paths used in the
new $L$-flow $f'$, the attribute $i(v)$ specifies the index of a path from $f$
s.t. $\bar q(v)$ is a subpath of $p_{i(v)}$, and the attributes $b(v)$ and $o(v)$, resp., specify
the number of edges along which the paths $p_{i(v)}$ and $p_{j(v)}$ are being
followed by some of the new paths.

Consider a node $w$ in the tree $T$ such that at least one edge in $\bar q(w)$,
say an edge $e$, is saturated in the $L$-flow $f$ (i.e., $f(e)=c(e)$). 
In this case, the properties of the tree
$T$ enforce that the node $w$ has at least one $3$-son $u$ whose responsibility
is to desaturate the edge $e$ by diverting one of the paths 
that use $e$ in $f$ along a new route; the attribute $j(u)$ specifies the index of the path from $f$
that is being diverted by the $3$-son $u$ of $w$ (Fig.~\ref{fig:3-son}), 
\begin{figure}[h] 
	\centering
	\includegraphics[scale =1]{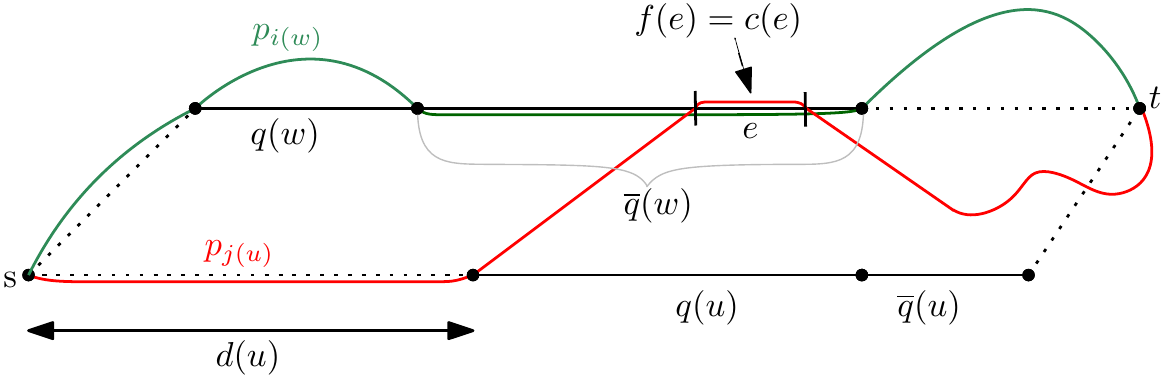} \label{obr:3syn}
	\caption{Desaturation of a saturated edge $e$ in a $\bar q(w)$ by a $3$-son $u$.
		\label{fig:3-son}}
\end{figure}
and $h(w)$ specifies which saturated
edges from $\bar q(v)$ are desaturated by the son $u$ of $w$.

As the definition of the tree $T$ does not pose any requirements on the
disjointness of the $\bar q$-paths corresponding to different nodes of
$T$, it may happen that the paths $\bar q(w)$ and $\bar q(w')$ for two different
nodes $w$ and $w'$ of the tree $T$ overlap in a saturated edge $e$. In this
case, Koubek and {\Riha} allow an {\em exception} (our terminology) to the rule
described in previous paragraph: if one of the nodes $w$ and $w'$, say the node
$w$, has a $3$-son $u$ that desaturates $e$, the other node, the node $w'$, need not
have a $3$-son but it may have a $4$-son instead. The purpose of this $4$-son
is just to provide a pointer to the $3$-son $u$ of $w$ that takes care about
the desaturation of the edge $e$.

\subsection{Small mistakes and typos}\label{smallMistakes}

The paper is full of small mistakes and typos which change the meaning. 
Here we mention the most striking typo.
On the page 393 in the paper \cite{koubek1981maximum}, there is the rule 3b:
	\begin{center}
			If $v$ is a $1$-son \textbf{of} a $3$-son then $v$ has a $1$-son 
		if and only if \\ $(END(q(v)) + b(v)) \mod\ p_{i(v)} \neq t$. 
	\end{center}
where $v$ is note in the tree $T$, $t$ is a vertex in the graph $G$, 
$END(q(v))$ denotes the last vertex of the path $q(v)$, and for a path $p$,
a vertex $w$ on $p$ and an integer $k$, $w+k \mod p$ denotes the vertex on the 
path $p$ that is $k$ edges after $w$.
The correct reading of the above rule, with a significantly different meaning, is:
        \begin{center}
                        If $v$ is a $1$-son \textbf{or} a $3$-son, then $v$ has a $1$-son 
                if and only if \\ $(END(q(v)) + b(v)) \mod\ p_{i(v)} \neq t$. 
        \end{center}

The difficulty with the original version is that it does not guarantee that the
paths in new $L$-flow $f'$ terminate in the vertex $t$.


\subsection{The main error}\label{mainError}

We start by recalling a few definitions and lemmas from the 
original paper~\cite{koubek1981maximum};
for the definition of the increasing system (more than one page long)
we refer to~\cite{koubek1981maximum}. 

\begin{defn}[Definition 4.2 in \cite{koubek1981maximum}]\label{d4.2}
	Let $ T $ be an \iks with respect to an~\kf $ f = \{ (p_i,r_i): i \in I\} $ in a network $ G = (X, R, c, s, t) $. Given an edge $ u \in R $, we define:
	\begin{itemize}
		\item $ T_1(u) $ is the number of vertices $ x $ in the tree $ T $ such that $ u \in \overline{q}(x) $ and if there is a saturated edge $ v \in \overline{q}(x) $ then there is a $ 3 $-son $ y $ of $ x $ with $ v \in h(y) $, $ u \notin p_{j(y)} $. 
		\item $ T_2(u) $ is the number of vertices $ x $ in the tree $ T $ such that $ u \in q(x) $.
		\item $ T_3(u) $ is the number of vertices $ x $ which are $ 3 $-sons or $ 4 $-sons with $ u \in h(x) $. 
	\end{itemize}
	For $ i \in I $ we denote $ m_i = \sup \{T_3(u): u \in p_i \} $, $|T| = \min \{\frac{c(u)}{T_2(u)}: u\in R, f(u) = 0\} \cup \{\frac{c(u) - f(u)}{T_1(u)}: u \in R\} \cup \{\frac{r_i}{m_i}: i \in I\}$, where the expressions that are not defined are omitted.
\end{defn}

\begin{lemma}[Lemma 4.2 in \cite{koubek1981maximum}]\label{l4.2}
	If there is an \iks with respect to an \kf $f$, then there is an \kf $g$ 
with $ |g| = |f| + |T| $.
\end{lemma}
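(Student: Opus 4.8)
The plan is to read the \iks $T$ as an explicit recipe for rerouting $f$, and to push a common amount $\lambda=|T|$ of flow along the new routes it prescribes. With each node $x$ of $T$ one associates the concatenated segment $q(x)\bar q(x)$, where $q(x)$ uses only edges carrying no $f$-flow and $\bar q(x)$ is a subpath of the flow path $p_{i(x)}$; the rules defining an \iks (in particular the continuation rule~3b in its corrected form) guarantee that along every root-to-leaf chain these segments meet head to tail and form an $s$-$t$ walk, and that the attributes $a(v),b(v),d(v)$ bound the total length of that walk by $L$ — this is exactly what the corrected rule~3b buys, since it forces the walk to reach $t$ before the length budget is exhausted. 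A $3$-son $u$ of a node $w$ records that one of the flow paths through a saturated edge of $\bar q(w)$, namely $p_{j(u)}$, is diverted onto a new route $q(u)\bar q(u)\dots$ so as to free capacity, and a $4$-son only points to a $3$-son that has already performed such a diversion for an overlapping $\bar q$-segment.

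Given this reading I would define $g$ from $f$ by adding $\lambda$ units along each new $s$-$t$ route and subtracting $\lambda$ units from the piece of each diverted path $p_{j(u)}$ that is replaced by its detour. On a single edge $u$ this amounts to $g(u)=f(u)+\lambda\,T_2(u)$ when $f(u)=0$, and $g(u)\le f(u)+\lambda\,T_1(u)$ when $f(u)>0$ (the inequality accounting for the desaturations performed by the $3$-sons), while the $3$-sons that divert a given path $p_i$ remove at most $\lambda\,m_i$ units from it in total. Plugging in $\lambda=|T|$ and using Definition~\ref{d4.2}, each of $\lambda\,T_2(u)\le c(u)$ (for $f(u)=0$), $f(u)+\lambda\,T_1(u)\le c(u)$, and $\lambda\,m_i\le r_i$ holds by the very definition of $|T|$ as the minimum of the corresponding ratios. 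Hence $g$ is nonnegative, respects all capacities, and — because every elementary move in the construction either creates a whole new $s$-$t$ path of length $\le L$ or reroutes an existing one within length $\le L$ — decomposes into paths of length at most $L$; conservation at internal vertices follows since the $q$- and $\bar q$-segments join head to tail and each diverted path leaves and re-enters its original trajectory at matching vertices.

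For the value of $g$: the only moves touching $s$ or $t$ are the additions of $\lambda$ units along the new $s$-$t$ routes issuing from the root, and every diversion of a path $p_{j(u)}$ is flow-conserving (it removes $\lambda$ units from $p_{j(u)}$ and re-routes precisely those $\lambda$ units, without creating or destroying flow at $s$ or $t$); therefore $|g|=|f|+\lambda=|f|+|T|$, as required. One should also record that $|T|>0$, which holds because the quantities $T_1(u)$, $T_2(u)$, $m_i$ in its definition are finite while the capacities $c(u)$ and the residuals $c(u)-f(u)$ and $r_i$ are positive wherever the corresponding ratio is formed.

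The step I expect to be the real obstacle is the capacity bound on the edges genuinely used by $f$, i.e.\ the inequality $g(u)\le f(u)+\lambda\,T_1(u)$. Here one must produce, for every node $x$ whose $\bar q(x)$ runs through a saturated edge, a matching $3$-son that frees $\lambda$ units on that edge \emph{without re-using it} — this is precisely what the side condition $u\notin p_{j(y)}$ in the definition of $T_1(u)$ is designed to encode — and doing so requires controlling the overlaps between the $\bar q$-paths of different nodes together with the chains of $4$-son pointers. Carrying this accounting through (and, similarly, justifying that the diversions of any one path $p_i$ amount to at most $\lambda\,m_i$, and that the rerouted paths still have length $\le L$, using $a(v),b(v),d(v),o(v)$) is exactly where the full set of structural invariants of the \iks must be invoked, and — in view of the flaws in \cite{koubek1981maximum} noted above — where the argument is most delicate.
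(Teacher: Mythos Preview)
Your proposal attempts to prove the lemma, but the paper does not prove it: it \emph{refutes} it. Lemma~\ref{l4.2} is quoted verbatim from~\cite{koubek1981maximum}, and Section~\ref{mainError} exhibits a network with a maximum $2$-bounded flow $f$ of size~$2$ together with an increasing $2$-system $T$ satisfying $|T|=1/2$. If the lemma held, there would exist a $2$-bounded flow $g$ with $|g|=5/2>2$, contradicting maximality of~$f$. Hence Lemma~\ref{l4.2} is false, and no proof along the lines you sketch can go through.

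You have in fact put your finger on exactly the place where the argument breaks, namely the accounting for desaturation via $3$-sons and $4$-son pointers. Your sketch implicitly assumes that following a $4$-son pointer lands at a $3$-son whose desaturation work is independent of the chain that demanded it. The counterexample in the paper shows this is not guaranteed by the definition of an \iksw: one can have $u_0$ with $3$-son $u_1$ (diverting $p_1$ to free $sa$), $u_1$ with $3$-son $u_2$ (diverting $p_2$ to free $sb$), and $u_2$ with $4$-son $u_3$ that points \emph{back} to $u_1$. This creates a deadlock cycle in which $u_1$ is supposed to free capacity on $sa$ for both $u_0$ and (via $u_3$) $u_2$, while itself depending on $u_2$. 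Your inequality $g(u)\le f(u)+\lambda\,T_1(u)$ then fails: the same $3$-son is counted as providing $\lambda$ units of slack for two distinct users of the saturated edge, but it can only divert $\lambda$ units of a single path once. The structural invariants of the \iks simply do not exclude such cycles, so the ``matching $3$-son'' step you flagged as delicate is in fact impossible in general.

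A secondary issue: your claim that diversions are flow-conserving at $s$ and $t$ is also undermined by the counterexample, since there the diversions of $p_1$ and $p_2$ start at $s$ itself; the net effect at $s$ is not simply ``add $\lambda$ along one new route''.
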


\begin{defn}[Definition 4.3 in \cite{koubek1981maximum}]\label{d4.3}
	Let $ \overline{R} = R \cup \{u'\} $, where $ u' \notin R $ and $ c(u') = 0 $. 
We put $ \psi(G) = \min( |c(u) - c(v)|: c(u) \neq c(v), u, v \in \overline{R} )$.
\end{defn}

\begin{figure}[h]\centering
		\includegraphics[scale =1]{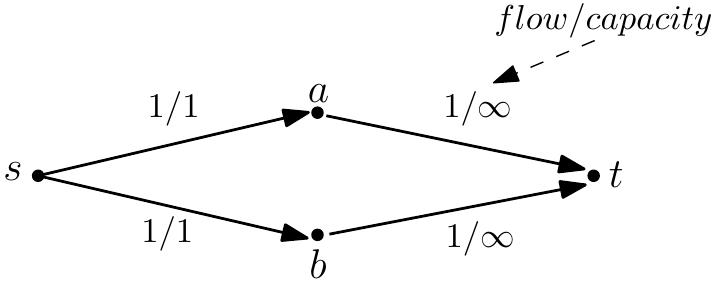}
		\caption{A network $G$ with a $2$-bounded flow $f$.}\label{network_5bounded_flow} \label{inputNetwork}
\end{figure}		

\begin{figure}[h]\centering
		\includegraphics[scale = 1]{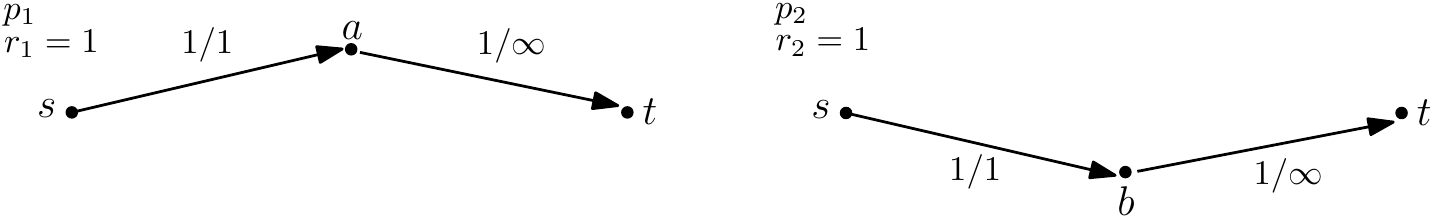}
		\caption{A decomposition of the $2$-bounded flow $f$ into paths 
		$ p_1, p_2$. 
			\label{inputFlow} } 
\end{figure}	

\begin{lemma}[Lemma 4.4 in \cite{koubek1981maximum}]\label{l4.4}
	For each \iks $ T $ (with respect to an \kf ${ f =\{(p_i, r_i): i \in I\}) }$ constructed by the above procedure it holds $ |T| \geq \psi(G)/|I| $.
\end{lemma}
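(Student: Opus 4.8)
The plan is to unwind the definition of $|T|$ and to lower-bound each of the three kinds of terms appearing in the minimum by $\psi(G)/|I|$ separately. A term of the first kind equals $c(u)/T_2(u)$ for an edge $u$ with $f(u)=0$; a term of the second kind equals $(c(u)-f(u))/T_1(u)$ for an edge $u$; a term of the third kind equals $r_i/m_i=r_i/\sup\{T_3(u):u\in p_i\}$ for some $i\in I$. So it suffices to prove two things: (a) whenever one of these terms actually occurs in the minimum, its numerator is at least $\psi(G)$; and (b) for every edge $u$ and every $i\in I$ we have $T_1(u),T_2(u),T_3(u)\le |I|$, and hence also $m_i\le |I|$. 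Granting (a) and (b), every term is at least $\psi(G)/|I|$, which is exactly the claim.

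For (a): if $c(u)$ occurs, then $u$ lies on some $q(x)$ and thus is an edge actually used by the new flow, so $c(u)>0$; since $c(u')=0\neq c(u)$, the definition of $\psi(G)$ gives $c(u)=|c(u)-c(u')|\ge\psi(G)$. If $r_i$ occurs then $p_i$ is a path of the decomposition of $f$, so $r_i>0$. If $(c(u)-f(u))/T_1(u)$ occurs then $T_1(u)\ge 1$, and from the definition of $T_1$ the edge $u$ cannot be saturated: a saturated edge $u\in\overline{q}(x)$ would have to be desaturated by a $3$-son $y$ of $x$ with $u\in h(y)$, but then $u\in p_{j(y)}$, contradicting the requirement $u\notin p_{j(y)}$ in the definition of $T_1$; hence $c(u)-f(u)>0$. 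It then remains to see that these positive quantities $r_i$ and $c(u)-f(u)$ are not merely positive but at least $\psi(G)$. For this I would set up an invariant, proved by induction on the iterations of the overall procedure, stating that every flow $f$ handed to the construction has the property that each residual capacity $c(u)-f(u)$ and each value $r_i$ of the decomposition it produces is either $0$ or at least $\psi(G)$. The base case is the starting flow (e.g.\ the zero flow, where the only relevant quantities are the $c(u)\ge\psi(G)$), and the inductive step invokes Lemma~\ref{l4.2} together with the observation that $|T|$ itself is a value of the form controlled by $\psi(G)$.

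For (b) I would argue directly from the rules defining an increasing $L$-system. Each non-root node $x$ of $T$ carries a path index $i(x)\in I$, and every $3$-son or $4$-son carries in addition an index $j(x)\in I$; the construction is arranged so that two distinct nodes counted in the same $T_1(u)$ (respectively $T_2(u)$, respectively $T_3(u)$) cannot carry the same index — intuitively because the portions $\overline{q}(x)$ of a fixed flow path $p_i$ reused by the new flow are pairwise edge-disjoint, the new-edge segments $q(x)$ routing through $u$ into a given $p_i$ are pairwise disjoint, and the diversion sets $h(x)$ charged to a given $p_i$ are disjoint. Consequently each of $T_1(u),T_2(u),T_3(u)$ is bounded by the number of indices, i.e.\ by $|I|$. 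I expect step (b) — and the precise form of the invariant in the inductive step of (a) — to be the main obstacle, since both rely essentially on the (page-long) definition of the increasing $L$-system; and given the context, one should be prepared for the possibility that the definition does not in fact enforce the disjointness required, in which case the bound $|T|\ge\psi(G)/|I|$ would break down.
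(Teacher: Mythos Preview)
The paper contains no proof of this lemma. Lemma~\ref{l4.4}, like Lemma~\ref{l4.2} and Definitions~\ref{d4.2} and~\ref{d4.3}, is quoted verbatim from the Koubek--\v{R}\'{\i}ha paper~\cite{koubek1981maximum}; the present paper's purpose in this section is not to reprove those statements but to exhibit a counterexample showing that the Koubek--\v{R}\'{\i}ha algorithm is incorrect. The authors construct a concrete increasing $2$-system $T$ for a maximum $2$-bounded flow, compute $|T|=1/2>0$, and conclude that since Lemmas~\ref{l4.2} and~\ref{l4.4} together would then yield a strictly larger $2$-bounded flow (a contradiction), at least one of the two lemmas must be false; they identify Lemma~\ref{l4.2} as the one that fails. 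So there is simply nothing in this paper to compare your sketch against.

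As for the sketch itself: the decomposition into (a) numerator $\ge\psi(G)$ and (b) denominator $\le|I|$ is the natural plan, and your argument for the $c(u)$ case of (a) is correct. But the remaining pieces are not proofs, only hopes. For (b) you assert that distinct nodes contributing to the same $T_k(u)$ must carry distinct indices, citing ``intuition'' about disjointness of the $\overline{q}$-, $q$-, and $h$-segments; yet the paper explicitly points out that the definition of $T$ imposes no disjointness on the $\overline{q}$-paths (this is precisely what enables the $4$-son mechanism and the deadlock counterexample), so the premise of your counting argument is unsupported. For the $r_i$ and $c(u)-f(u)$ parts of (a) you propose an inductive invariant over iterations of the overall procedure, but you neither state the procedure nor carry out the induction. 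In short, your proposal is a plausible outline that defers every nontrivial step to properties of the increasing $L$-system you have not verified---and the whole point of the surrounding section is that those definitions are unreliable.
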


The {\em above procedure} in Lemma \ref{l4.4} refers to a construction of an
increasing $L$-system that is outlined in the original paper. 
As Definition \ref{d4.3} implies $\psi(G) > 0$, we also know by Lemma \ref{l4.4}
that for every increasing $L$-system $T$, $|T| > 0$.

Now we are ready to describe the counter example.
Take $k=2$ and consider the following network $G$ with a $2$-bounded flow $f$
of size $2$ (Fig. \ref{inputNetwork} and \ref{inputFlow}); apparently, this is 
a maximum $2$-bounded flow. 

We are going to show that there exists an increasing system $T$ for $f$. 
According to Lemmas~\ref{l4.2} and \ref{l4.4} this implies the existence 
of a $2$-bounded flow $g$ of size $|f|+|T| > |f|$.
As the flow $f$ is a maximum $2$-bounded flow in $G$, this is a contradiction.

\begin{figure}[ht]\centering
		\includegraphics[scale = 1]{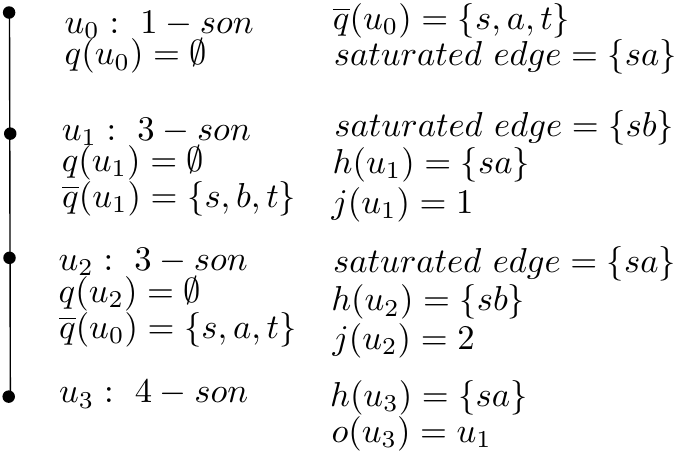}
		\caption{Increasing $ 2 $-system $ T $. 
			\label{increasingTree} }
\end{figure}	

The increasing system $T$ is depicted in Figure \ref{increasingTree}; for the
sake of simplicity, we list only the most relevant attributes.  It is just a
matter of a mechanical effort to check that it meets Definition 4.1 of the
increasing system from the original paper. 

In words, the essence of the counter example is the following.
The purpose of the root of the tree, the node $u_0$, is to increase the flow
from $s$ to $t$ along the path $q(u_0)\bar q(u_0)$ which is (accidently)
the path $p_1$. As there is a saturated edge on this path, namely the edge
$sa$, there is a $3$-son of the node $u_0$, the node $u_1$, whose purpose
is to desaturate the edge $sa$ by diverting one of the paths that use the
edge $sa$ along an alternative route; in particular, the node $u_1$ is
diverting the path $p_1$ and it is diverting it from the very beginning,
from $s$, along the path $q(u_1)\bar q(u_1)$ which is (accidently) the
path $p_2$. 

As there is a saturated edge on this path, namely the edge
$sb$, there is a $3$-son of the node $u_1$, the node $u_2$, whose purpose
is to desaturate the edge $sb$ by diverting one of the paths that use the
edge $sb$ along an alternative route; in particular, the node $u_2$ is
diverting the path $p_2$ and it is diverting it from the very beginning,
from $s$, along the path $q(u_2)\bar q(u_2)$ which is (accidently) 
again the path $p_1$. 

As there is a saturated edge on this path, namely the edge
$sa$, and as there is already another node in the tree that is
desaturating $sa$, namely the node $u_1$, the node $u_2$ does not have
a $3$-son but it has a $4$-son $u_3$ instead, which is a pointer to the 
$3$-son $u_1$. This way, there is a kind of a deadlock cycle in the 
increasing system: $u_1$ is desaturating the edge $sa$ for the node $u_0$ but
it itself needs $u_2$ to desaturate the edge $sb$ in it and $u_2$ in turn needs
$u_3$ to desaturate the edge $sa$, but $u_3$ delegates this task back
to $u_1$. 

At this point, we know that Lemma \ref{l4.2} or Lemma \ref{l4.4} is not correct.
By Definition~\ref{d4.2}, one can check that $|T|=1/2$ which implies,
as we started with a maximum flow, that it is Lemma~\ref{l4.2} that does
not hold.

\section{FPTAS for maximum $L$-bounded flow}\label{chap:approximations}

We first describe a fully polynomial approximation scheme for maximum
$L$-bounded flow on networks with unit edge length. The algorithm is based on 
the algorithm for the maximum multicommodity flow by Garg and 
K\"{o}nemann~\cite{garg2007apx}.

Then we describe a FPTAS for the $L$-bounded
flow problem with general edge lengths.  Our approximation schemas for the
maximum $L$-bounded flow on unit edge lengths and the maximum $L$-bounded flow
with edge lengths are almost identical, the only difference is in using an
approximate subroutine for resource constrained shortest path in the general
case which slightly complicates the analysis.

\subsection{FPTAS for Unit Edge Lengths}\label{sec:apxunit}

Let us consider the path based linear programming (LP) formulation of the maximum $L$-bounded flow, 
\textbf{P}$_{\text{path}}$, and its dual, \textbf{D}$_{\text{path}}$.
We assume that $G=(V,E,c,s,t)$ is a given network and $L$ is a given 
length bound.
Let $\mathcal{P}_L$ denote the set of all $s$-$t$ paths of length at most
$L$ in $G$. There is a primal variable $x(p)$ for each path $p\in \mathcal{P}_L$,
and a dual variable $y(e)$ for each edge $e\in E$.
Note that the dual LP  is a relaxation of an
integer LP formulation of the minimum $L$-bounded cut problem.

\begin{minipage}{.5\linewidth}
  \centering
\begin{equation*}\label{eq:pathlp}
\begin{alignedat}{2}
    &\text{max} \ & \sum_{P\in\mathcal{P}_L}x(P) & \\
    &\text{s.t.} \ & \sum_{\substack{P\in\mathcal{P}_L:\\e\in P}}x(P) &\le
c(e)\quad \forall e\in E\\
    && x &\ge 0
\end{alignedat}
\end{equation*}

\end{minipage}%
\begin{minipage}{.5\linewidth}
  \centering 
\begin{equation*}\label{eq:pathdual}
\begin{alignedat}{2}
    &\text{min} \ & \sum_{e\in E}c(e)y(e) &  \\
    &\text{s.t.} \ & \sum_{e\in P}y(e) &\ge 1\quad \forall P\in
\mathcal{P}_L\\
    && y &\ge 0
\end{alignedat}
\end{equation*}
\end{minipage}

The algorithm simultaneously constructs solutions for the maximum $L$-bounded
flow and the minimum fractional $L$-bounded cut.  It iteratively routes flow
over shortest paths with respect to properly chosen dual edge lengths and at
the same time increases these dual lengths; dual edge length of the edge $e$
after $i$ iterations will be denoted by $y_i(e)$.  During the runtime of the
algorithm, the constructed flow need not respect the edge capacities; however,
with the right choice of parameters $\eps, \delta$ the resulting flow can be
scaled down to a feasible (i.e., respecting the edge capacities) 
flow (Lemma~\ref{lem:feasibleflow}) that is a
$(1+\eps)$-approximation of the maximum $L$-bounded flow
(Theorem~\ref{thm:approxunit}).

For a vector $y$ of dual variables, let $d^L_y(s,t)$ denote the length of the
$y$-shortest $s-t$ path from the set of paths $\mathcal{P}_L$ and let
$\alpha^L(i)=d_{y_i}^L(s,t)$.  Note that a shortest $s-t$ path with respect to
edge lengths $y$ that uses at most a given number of edges can be computed in
polynomial time by a modification of the Dijkstra's shortest path algorithm.

\begin{algorithm}
\caption{\textsc{Approx}($\eps, \delta$)}
\label{algo:approx}
\begin{algorithmic}[1]
\STATE $i\leftarrow 0, \ y_0(e) \leftarrow \delta \quad \forall e\in E,\  x_0(P)\leftarrow 0
\quad \forall P\in \mathcal{P}_L$
\WHILE{$\alpha^L(i)<1$}
\STATE $i\leftarrow i+1$
\STATE $x_i \leftarrow x_{i-1}, y_i \leftarrow y_{i-1}$
\STATE $P \leftarrow y_i$-shortest $s$-$t$ path with at most $L$ edges
\STATE $c \leftarrow \min\limits_{e\in P}c(e)$
\STATE $x_i(P)\leftarrow x_i(P)+c$
\STATE $y_i(e)\leftarrow y_{i}(e)(1+\eps c/c(e)) \quad \forall e\in P$
\ENDWHILE
\RETURN $x_i$
\end{algorithmic}
\end{algorithm}

Let $f_i$ denote the size of the flow after $i$ iterations,
$f_i=\sum_{P\in\mathcal{P}_L}x_i(P)$, and let $\tau$ denote the total number of
iterations performed by \textsc{Approx}; then $x_{\tau}$ is the
output of the algorithm and $f_{\tau}$ its size.

\begin{lemma}
\label{lem:feasibleflow}
The flow $x_{\tau}$ scaled down by a factor of $\log_{1+\eps}\frac{1+\eps}{\delta}$
is a feasible $L$-bounded flow.
\end{lemma}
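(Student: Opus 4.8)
The plan is to show that for every edge $e\in E$, the total flow $x_\tau(e)=\sum_{P\ni e}x_\tau(P)$ routed through $e$ over the whole run of the algorithm is at most $c(e)\cdot\log_{1+\eps}\frac{1+\eps}{\delta}$; dividing the whole flow by this common factor then yields a feasible $L$-bounded flow (it is still a nonnegative combination of paths from $\mathcal P_L$, so $L$-boundedness is preserved automatically). The natural device is to track how the dual length $y_i(e)$ of a single edge $e$ grows as a function of the amount of flow pushed across $e$. First I would record the two invariants maintained by the algorithm: $y_i(e)\ge y_{i-1}(e)$ always, and whenever iteration $i$ routes an amount $c_i=\min_{e'\in P_i}c(e')$ of flow along a path $P_i$ containing $e$, the update rule multiplies $y(e)$ by $(1+\eps c_i/c(e))$, where crucially $c_i\le c(e)$ because $e$ lies on $P_i$.

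The key step is the estimate $y_\tau(e)\ge \delta\,(1+\eps)^{\,x_\tau(e)/c(e)}$. To get this I would consider only the iterations $i_1<i_2<\dots<i_k$ that route flow through $e$, pushing amounts $c_{i_1},\dots,c_{i_k}$ respectively, so that $x_\tau(e)=\sum_j c_{i_j}$. Starting from $y_0(e)=\delta$ and applying the update $k$ times gives
\[
y_\tau(e)=\delta\prod_{j=1}^{k}\Bigl(1+\eps\,\frac{c_{i_j}}{c(e)}\Bigr).
\]
Now I would invoke the elementary inequality $1+\eps z\ge (1+\eps)^{z}$ valid for $z\in[0,1]$ (convexity of $t\mapsto(1+\eps)^t$, with equality at $z=0,1$), applied with $z=c_{i_j}/c(e)\in(0,1]$. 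This turns the product into $\delta\prod_j(1+\eps)^{c_{i_j}/c(e)}=\delta\,(1+\eps)^{x_\tau(e)/c(e)}$, as wanted.

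Finally I would bound $y_\tau(e)$ from above. The loop terminates only when $\alpha^L(i)\ge 1$; on the last iteration $\tau$ the path $P_\tau$ was still a shortest path of $y_{\tau-1}$-length strictly below $1$, so every edge on it, and in particular if $e\in P_\tau$ then $y_{\tau-1}(e)<1$; the single update on iteration $\tau$ multiplies $y(e)$ by at most $(1+\eps)$ (since $c_\tau/c(e)\le 1$), hence $y_\tau(e)<1+\eps$. If instead $e\notin P_\tau$, then $y_\tau(e)=y_{\tau-1}(e)$, and the same reasoning applied to the last iteration that did touch $e$ gives $y_\tau(e)<1+\eps$ as well (and if no iteration ever touches $e$, then $y_\tau(e)=\delta<1+\eps$). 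Combining, $\delta\,(1+\eps)^{x_\tau(e)/c(e)}\le y_\tau(e)<1+\eps$, so $(1+\eps)^{x_\tau(e)/c(e)}<\frac{1+\eps}{\delta}$, i.e. $x_\tau(e)/c(e)<\log_{1+\eps}\frac{1+\eps}{\delta}$. Thus scaling $x_\tau$ down by $\log_{1+\eps}\frac{1+\eps}{\delta}$ makes every capacity constraint satisfied. I expect the only mildly delicate point to be the bookkeeping in the last step — correctly arguing $y_\tau(e)<1+\eps$ whether or not $e$ lies on the final augmenting path — rather than any substantive difficulty; the inequality $1+\eps z\ge(1+\eps)^z$ on $[0,1]$ is the one ingredient worth stating explicitly.
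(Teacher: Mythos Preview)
Your proof is correct and follows essentially the same route as the paper: the lower bound $y_\tau(e)\ge\delta(1+\eps)^{x_\tau(e)/c(e)}$ via the inequality $1+\eps z\ge(1+\eps)^z$ on $[0,1]$ (which the paper calls Bernoulli's inequality), combined with the upper bound $y_\tau(e)<1+\eps$. Your case analysis for the upper bound is in fact slightly more careful than the paper's one-line justification, which asserts $y_{i-1}(e)<1$ for \emph{every} edge directly from $\alpha^L(i-1)<1$; your argument via the last iteration touching $e$ is the cleaner way to say it.
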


\begin{proof}
By construction, for every $i$, $x_i$ is an $L$-bounded flow. Thus, we only have
to care about the feasibility of the flow
\begin{equation}
\frac{x_{\tau}}{\log_{1+\eps}\frac{1+\eps}{\delta}}\ .
\end{equation}

For every iteration $i$ and every edge $e\in E$, as $\alpha^L(i-1)<1$, we also
have $y_{i-1}(e) < 1$ and so $y_i(e)< 1+\eps$. It follows that 
\begin{equation}\label{eqn:lengthtupper}
y_{\tau}(e)<1+\eps \ .
\end{equation}

Consider an arbitrary edge $e\in E$ and suppose that the flow $f_{\tau}(e)$
along $e$ has been routed in iterations $i_1, i_2,\dots,i_r$ and the amount of
flow routed in iteration $i_j$ is $c_j$.  Then $f_{\tau}(e)=\sum_{j=1}^r c_j$
and $y_{\tau}(e)=\delta\prod_{j=1}^r (1+\eps c_j/c(e))$. Because each $c_j$ was
chosen such that $c_j\le c(e)$, we have by Bernoulli's inequality that $1+\eps
c_j/c(e)\ge
(1+\eps)^{c_j/c(e)}$ and 
\begin{equation}\label{eqn:lengthtlower}
y_{\tau}(e)\ge\delta\prod_{j=1}^r(1+\eps)^{c_j/c(e)} = \delta
(1+\eps)^{f_{\tau}(e)/c(e)}.
\end{equation}
Combining inequalities (\ref{eqn:lengthtupper}) and (\ref{eqn:lengthtlower}) gives
$$\frac{f_{\tau}(e)}{c(e)}\le\log_{1+\eps}\frac{1+\eps}{\delta}$$ 
which completes the proof.
\end{proof}

\begin{claim} \label{lem:alpha}
For $i=1,\ldots,\tau$,
\begin{align}
\alpha^L(i)& \le\delta L e^{\eps f_i/\beta} \ .
\end{align}
\end{claim}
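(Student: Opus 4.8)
The plan is to bound $\alpha^L(i)$ in terms of the total weight $D_i := \sum_{e\in E} c(e) y_i(e)$ of the dual solution after $i$ iterations, and then bound $D_i$ via the flow routed so far. First I would recall that $\alpha^L(i) = d^L_{y_i}(s,t)$ is the $y_i$-length of some $s$-$t$ path using at most $L$ edges; since every single edge $e$ satisfies $y_i(e) \le D_i / c(e) \le D_i \cdot (\text{something})$ — more precisely, $c(e) y_i(e) \le D_i$, so $y_i(e) \le D_i / c(e)$ — summing along a path of at most $L$ edges does not immediately help unless we know the edges have large capacity. Instead the cleaner route, following Garg--K\"onemann, is: the $y_i$-shortest length-bounded path has length at most $L$ times the smallest edge length, and more usefully one shows $\alpha^L(i) \le L \cdot \min_{e} y_i(e)$ is too weak; the right statement is that $\alpha^L(i)$ is at most $L$ times an ``average'' controlled by $D_i$. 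Let me set $\beta$ to be the appropriate normalizing constant (the value the paper uses, presumably related to $d^L_y$ over a suitable packing LP, i.e. $\beta = \min_y D(y)/\alpha^L(y)$ or the optimal dual value); I would state the recurrence for $D_i$.

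The key computation is the recurrence. In iteration $i$ the algorithm picks the $y_i$-shortest length-$\le L$ path $P$, pushes $c = \min_{e\in P} c(e)$ units, and updates $y_i(e) = y_{i-1}(e)(1+\eps c/c(e))$ for $e \in P$. Hence
\[
D_i = \sum_{e\in E} c(e) y_i(e) = D_{i-1} + \eps c \sum_{e\in P} y_{i-1}(e) = D_{i-1} + \eps c\, \alpha^L(i-1).
\]
Wait — I must be careful that $\alpha^L(i-1)$ here is measured with $y_{i-1}$ along exactly the path $P$ chosen in iteration $i$; since $P$ is $y_i$-shortest and $y_i = y_{i-1}$ at the start of the iteration, $\sum_{e\in P} y_{i-1}(e) = \alpha^L(i-1)$. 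Also the flow increases by exactly $c$, so $f_i = f_{i-1} + c$. Now I use the relationship between $\alpha^L$ and $D$: by the definition of $\beta$ (the optimum of the dual packing problem, equivalently $\alpha^L(i) \le D_i/\beta$ — this is where the constant $\beta$ is pinned down, it is the maximum $L$-bounded flow value, or the LP optimum, so that for any nonnegative $y$, $D(y) \ge \beta \cdot d^L_y(s,t)$ by LP duality / feasibility scaling). Substituting,
\[
D_i \le D_{i-1} + \eps (f_i - f_{i-1})\, \alpha^L(i-1) \le D_{i-1}\Bigl(1 + \frac{\eps (f_i - f_{i-1})}{\beta}\Bigr) \le D_{i-1}\, e^{\eps(f_i - f_{i-1})/\beta}.
\]

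Unrolling from the base case $D_0 = \sum_e c(e)\delta = \delta \sum_e c(e)$, I get $D_i \le D_0\, e^{\eps f_i/\beta}$. To finish, I convert back: $\alpha^L(i) \le D_i/\beta \le (D_0/\beta)\, e^{\eps f_i/\beta}$, and it remains to check $D_0/\beta \le \delta L$. This holds because $\sum_e c(e) \le \beta L$: indeed a single path carries at most (min capacity) $\le$ any capacity, and more directly, the maximum $L$-bounded flow $\beta$ times $L$ is at least $\sum_e c(e)$... hmm, this last inequality is actually the only delicate point and depends on how $\beta$ is defined in the paper — if $\beta = \min_{\text{cut}}$-type quantity or $\beta = \sum_e c(e)/L$ as a normalization it is immediate. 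I expect the paper defines $\beta$ precisely so that $D_0 = \delta \sum_e c(e) \le \delta L \beta$, making the last step trivial. So the main obstacle is not the recurrence (routine) but correctly invoking the inequality $D(y) \ge \beta\, d^L_y(s,t)$ for all $y\ge 0$ — i.e., identifying $\beta$ with the LP optimum and noting that $y/d^L_y(s,t)$ is dual-feasible, hence its objective is $\ge \beta$ — and checking the base case normalization; both are standard in the Garg--K\"onemann framework, and I would just cite the definition of $\beta$ fixed earlier in the section.
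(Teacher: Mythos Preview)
Your recurrence $D_i = D_{i-1} + \eps(f_i-f_{i-1})\,\alpha^L(i-1)$ and the bound $\alpha^L(i-1)\le D_{i-1}/\beta$ are both fine, and they do yield $D_i\le D_0\,e^{\eps f_i/\beta}$ and hence $\alpha^L(i)\le (D_0/\beta)\,e^{\eps f_i/\beta}$. The gap is exactly where you flagged it: the inequality $D_0/\beta\le \delta L$, i.e.\ $\sum_e c(e)\le \beta L$, is \emph{false in general}. Here $\beta$ is the LP optimum (equivalently $\min_y D(y)/d^L_y(s,t)$), and edges that lie on no short $s$--$t$ path, or simply many parallel bottlenecked routes, can make $\sum_e c(e)$ arbitrarily large compared to $\beta L$. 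So the base-case normalization cannot be rescued by ``how $\beta$ is defined''.

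The paper avoids this altogether with a small but essential trick: instead of applying $\beta\le D(y)/d^L_y(s,t)$ to $y=y_i$, it applies it to the \emph{shifted} length function $y_i-y_0$. Then $D(y_i-y_0)=D(i)-D(0)$, which kills the troublesome $D_0$ term, while $d^L_{y_i-y_0}(s,t)\ge \alpha^L(i)-\delta L$ because subtracting $\delta$ from every edge can shorten a path of at most $L$ edges by at most $\delta L$. This yields directly
\[
\alpha^L(i)\le \delta L + \frac{\eps}{\beta}\sum_{j=1}^i (f_j-f_{j-1})\,\alpha^L(j-1),
\]
and from here the same $1+x\le e^x$ unrolling you wrote gives $\alpha^L(i)\le \delta L\,e^{\eps f_i/\beta}$, with $\delta L$ appearing for the right reason (the length bound $L$), not via a false capacity inequality.
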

\begin{proof}
For a vector $y$ of dual variables, 
let $D(y)=\sum_e c(e)y(e)$ and let $\beta=\min_y D(y)/d^L_y(s,t)$. Note that
$\beta$ is equal to the optimal value of the dual linear program.
For notational simplicity we abbreviate $D(y_i)$ as $D(i)$.

Let $P_i$ be the path chosen in iteration $i$ and $c_i$ be the value of $c$ in
iteration $i$. For every $i\ge 1$ we have 
\begin{align}
D(i) 	&= \sum_{e\in E} y_i(e)c(e)\nonumber\\
	&= \sum_{e\in E} y_{i-1}(e)c(e) + \eps   \sum_{e\in P_i} y_{i-1}(e)c_i\nonumber\\
	&= D(i-1) + \eps(f_i-f_{i-1})\alpha^L(i-1)\nonumber
\end{align}
which implies that
\begin{equation}\label{eqn:di}
D(i)= D(0)+\eps\sum_{j=1}^i(f_j-f_{j-1})\alpha^L(j-1).
\end{equation}

Now consider the length function $y_i-y_0$. Note that $D(y_i-y_0)=D(i)-D(0)$
and $d^L_{y_i-y_0}(s,t)\ge \alpha^L(i)-\delta L$.
Hence, 
\begin{equation}
\label{eqn:beta}
\beta\le\frac{D(y_i-y_0)}{d_{y_i-y_0}^L(s,t)}\le\frac{D(i)-D(0)}{\alpha^L(i)-\delta
L} \ . 
\end{equation}
By combining relations~(\ref{eqn:di}) and~(\ref{eqn:beta}) we get
$$\alpha^L(i)\le \delta L
+\frac{\eps}{\beta}\sum_{j=1}^i(f_j-f_{j-1})\alpha^L(j-1) \ .$$

Now we define $z(0)=\alpha^L(0)$ and for $i=1,\ldots,\tau$, $z(i)=\delta L
+\frac{\eps}{\beta}\sum_{j=1}^i(f_j-f_{j-1})z(j-1)$.
Note that for each $i$, $\alpha^L(i) \le z(i)$.  Furthermore,
\begin{align*}
z(i)	&= \delta L +\frac{\eps}{\beta}\sum_{j=1}^i(f_j-f_{j-1})z(j-1)\\ 
	&= \left(\delta L
+\frac{\eps}{\beta}\sum_{j=1}^{i-1}(f_j-f_{j-1})z(j-1)\right)+\frac{\eps}{\beta}(f_i-f_{i-1})z(i-1)\\ 
	&=z(i-1)(1+\eps(f_i-f_{i-1})/\beta)\\
	&\le z(i-1)e^{\eps(f_i-f_{i-1})/\beta}.
\end{align*}
Since $z(0)\le\delta L$, we have $z(i)\le \delta Le^{\eps f_i/\beta}$, and thus also, 
for $i=1,\ldots,\tau$, 
$\alpha^L(i)\le\delta L e^{\eps f_i/\beta}\ .$
\end{proof}

\begin{theorem}\label{thm:approxunit}
For every $0<\eps<1$ there is an algorithm that computes an
$(1+\eps)$-approximation to the maximum $L$-bounded flow in a network with
unit edge lengths in time $\O(\eps^{-2}m^2L\log L)$. 
\end{theorem}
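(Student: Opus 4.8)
The plan is to follow the standard Garg–Könemann analysis and combine the two lemmas already proved with a stopping-condition argument, then account for the running time. First I would establish the approximation ratio. Let $\beta$ denote the optimal value of the dual LP (equivalently, by LP duality, the value of the maximum $L$-bounded flow). From Claim~\ref{lem:alpha} we have $\alpha^L(\tau)\le \delta L e^{\eps f_\tau/\beta}$, but the while-loop terminated because $\alpha^L(\tau)\ge 1$ (the last iteration pushed it over the threshold). Actually I must be a little careful: the loop condition is checked on $\alpha^L(i)$ before iteration $i+1$, so at termination $\alpha^L(\tau)\ge 1$. Hence $1\le \delta L e^{\eps f_\tau/\beta}$, which rearranges to
\begin{equation*}
\frac{\beta}{f_\tau}\le \frac{\eps}{\ln\frac{1}{\delta L}} \ .
\end{equation*}
Combining this with Lemma~\ref{lem:feasibleflow}, the scaled feasible flow has value $f_\tau/\log_{1+\eps}\frac{1+\eps}{\delta}$, so the ratio of the optimum $\beta$ to the value of the feasible flow we output is at most
\begin{equation*}
\frac{\beta \log_{1+\eps}\frac{1+\eps}{\delta}}{f_\tau}\le \frac{\eps\log_{1+\eps}\frac{1+\eps}{\delta}}{\ln\frac{1}{\delta L}} \ .
\end{equation*}
Using $\log_{1+\eps}\frac{1+\eps}{\delta}=\frac{\ln\frac{1+\eps}{\delta}}{\ln(1+\eps)}$ and the bound $\ln(1+\eps)\ge \eps-\eps^2/2=\eps(1-\eps/2)$, this is at most $\frac{1}{1-\eps/2}\cdot\frac{\ln\frac{1+\eps}{\delta}}{\ln\frac{1}{\delta L}}$. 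Now I choose $\delta$; the standard choice is $\delta=(1+\eps)\big((1+\eps)L\big)^{-1/\eps}$, or more simply $\delta$ of the form $\delta = \frac{1+\eps}{((1+\eps)L)^{1/\eps}}$, so that $\frac{\ln\frac{1+\eps}{\delta}}{\ln\frac{1}{\delta L}}=\frac{1}{1-\eps}$ (one checks this by direct substitution). This gives an overall ratio bounded by $(1-\eps/2)^{-1}(1-\eps)^{-1}$, which is at most $1+c\eps$ for some constant $c$ on $0<\eps<1$; rescaling $\eps$ by a constant factor at the start yields a genuine $(1+\eps)$-approximation. This is the routine part of the argument.

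The second part is bounding the number of iterations $\tau$ and the per-iteration cost. Each iteration saturates at least one edge in the sense that the edge $e$ achieving the minimum capacity on the chosen path $P$ has its dual length multiplied by exactly $(1+\eps)$. An edge $e$ starts with $y_0(e)=\delta$ and, by~(\ref{eqn:lengthtupper}) in Lemma~\ref{lem:feasibleflow}, never exceeds $1+\eps$ while it can still be chosen (more precisely, once $y(e)\ge 1$ the path through it has length $\ge 1$ so it is no longer on a chosen shortest path in a non-terminating iteration). Hence $e$ can be the minimum-capacity edge of a chosen path at most $\log_{1+\eps}\frac{1+\eps}{\delta}=\O(\eps^{-2}\log L)$ times (using the chosen $\delta$ and $\ln(1+\eps)=\Theta(\eps)$). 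Summing over all $m$ edges, $\tau=\O(\eps^{-2}m\log L)$. Each iteration runs a length-bounded shortest-path computation: a Dijkstra-style dynamic program over $L$ "layers" (number of edges used), costing $\O(mL)$ (or $\O(mL+nL\log n)$, which is dominated by $\O(mL)$ up to the usual conventions, or one can state $\O(mL\log n)$ and absorb). Multiplying, the total time is $\O(\eps^{-2}m\log L)\cdot \O(mL)=\O(\eps^{-2}m^2L\log L)$, as claimed.

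I expect the main obstacle to be the bookkeeping in the iteration-count bound — specifically, arguing cleanly that once an edge's dual length reaches $1$ it effectively stops being "charged", so that the number of times each edge is the bottleneck is genuinely $\O(\log_{1+\eps}\frac{1}{\delta})$ rather than something larger, and making sure this matches the claimed $\eps^{-2}\log L$ after substituting the specific $\delta$. One subtlety worth spelling out: in the final (terminating) iteration we may push flow on a path even though this makes $\alpha^L$ exceed $1$, so the feasible-flow and ratio bounds must tolerate this one extra step, which they do because Lemma~\ref{lem:feasibleflow} and Claim~\ref{lem:alpha} are stated for all $i\le\tau$ including the last. A final remark is that the theorem as stated asks only for the flow, so I do not need to output the dual cut, though the same run certifies near-optimality via $\beta$; I would mention this but not dwell on it.
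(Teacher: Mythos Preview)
Your proposal is correct and follows essentially the same argument as the paper: the same stopping-condition inequality $1\le\alpha^L(\tau)\le\delta L e^{\eps f_\tau/\beta}$ combined with Lemma~\ref{lem:feasibleflow}, the same choice $\delta=(1+\eps)/((1+\eps)L)^{1/\eps}$ giving the ratio $\frac{1}{1-\eps}$, the same Taylor bound on $\ln(1+\eps)$, and the same iteration count via the number of times an edge can be the bottleneck. The only cosmetic differences are your explicit tracking of natural logs and the final constant ($(1-\eps/2)^{-1}(1-\eps)^{-1}$ versus the paper's $(1-\tfrac32\eps)^{-1}\le 1+3\eps$), both handled by rescaling $\eps$.
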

\begin{proof}
We start by showing that for every $\eps<\frac13$ there is a constant 
$\delta=\delta(\epsilon)$ such that $x_{\tau}$, the output of \textsc{Approx}($\eps,\delta$), 
scaled down by $\log_{1+\epsilon}\frac{1+\epsilon}{\delta}$ as in Lemma~\ref{lem:feasibleflow},
is a $(1+3\eps)$-approximation.

Let $\gamma$ denote the approximation ratio of such an algorithm, that is,
let $\gamma$ denote the ratio of the optimal dual solution ($\beta$) to 
the appropriately scaled output of \textsc{Approx}($\eps,\delta$),
\begin{equation}
\gamma  = \frac{\beta \log_{1+\epsilon}\frac{1+\epsilon}{\delta}}{f_{\tau}} \ ,
\end{equation}
where the constant $\delta$ will be specified later.

By Claim~\ref{lem:alpha} and the stopping condition of the while cycle 
we have
\begin{equation}
1\le\alpha^L(\tau)\le\delta L e^{\eps f_{\tau}/\beta}\nonumber
\end{equation}
and hence
$$\frac{\beta}{f_{\tau}}\le\frac{\eps}{\log\frac{1}{\delta L}}.$$
Plugging this bound in the equality for the approximation ratio $\gamma$, we obtain
$$\gamma\le\frac{\eps\log_{1+\eps}\frac{1+\eps}{\delta}}{\log\frac{1}{\delta
L}}=\frac{\eps}{\log(1+\eps)}\frac{\log\frac{1+\eps}{\delta}}{
\log\frac{1}{\delta L}}.$$
Setting $\delta=\frac{1+\eps}{((1+\eps)L)^{1/\eps}}$ yields
$$\frac{\log\frac{1+\eps}{\delta}}{
\log\frac{1}{\delta L}} = \frac{\frac{1}{\eps}\log((1+\eps)L)}{
\left(\frac1\eps-1\right)\log((1+\eps)L)}=\frac1{1-\eps}.$$
Taylor expansion of $\log(1+\eps)$ gives a bound
$\log(1+\eps)\ge\eps-\frac{\eps^2}{2}$ for $\eps<1$ and it follows for
$\eps<\frac13$ that
$$\gamma\le\frac{\eps}{(1-\eps)\log(1+\eps)}\le
\frac{\eps}{(1-\eps)(\eps-\eps^2/2)}\le\frac{1}{1-\frac32\eps}\le 1+3\eps.$$

To complete the proof, we just put $\eps'=\eps/3$ and run 
\textsc{Approx}($\eps',\delta(\eps')$).
It remains to prove the time complexity of the algorithm.
In every iteration $i$ of \textsc{Approx}, the length $y_i(e)$ of an edge $e$
with the smallest capacity on the chosen path $P$ is increased by a factor of
$1+\eps'$.  Because $P$ was chosen such that $y_i(P)<1$ also $y_i(e)<1$ for
every edge $e \in P$. Lengths of other edges get increased by a factor
of at most $1+\eps'$, therefore $y_{\tau}(e)<1+\eps'$ for every edge $e\in E$.
Every edge has the minimum capacity on the chosen path in at most
$\left\lceil\log_{1+\eps'}\frac{1+\eps'}{\delta} \right\rceil =
\O(\frac1\eps\log_{1+\eps}L)$ iterations, so \textsc{Approx} makes at most
$\O(\frac m{\eps}\log_{1+\eps}L)=\O(\frac{m}{\eps^2}\log L)$ iterations. 

Each iteration takes time $\O(Lm)$ so the total time taken by \textsc{Approx} is
$\O(\eps^{-2}m^2L\log L)$. 
\end{proof}

\subsection{FPTAS for General Edge Lengths}\label{sec:apxgeneral}

Now we extend the approximation algorithm to networks with general
edge lengths that are given by  a length function $\ell:E\to \mathbb{N}$. 
The dynamic programming algorithm for computing shortest paths
that have a restricted length with respect to another length function,
does not work in this case. In fact, the problem of finding shortest path with
respect to a given edge length function while restricting to paths of bounded
length with respect to another length function is \cNP-hard in general~\cite{HZ:80}.
On the other hand, there exists a FPTAS for it~\cite{hassin1992restricted,lorenz2001restricted}.

We assume that we are given as a black-box an algorithm that for a given
graph $G$, two edge length functions $y$ and $\ell$, two distinguished
vertices $s$ and $t$ from $G$, a length bound $L$ and an error parameter $w>0$,
computes a $(1+w)$-approximation of the $y$-shortest path of $\ell$-length
at most $L$; we denote by $d_{y,\ell}^L(s,t;w)$ the length of such a path
and we also introduce an abbreviation $\bar\alpha^L(i)=d_{y_i,\ell}^L(s,t;w)$.
Note that for every $i$, $\bar\alpha^L(i) \leq (1+w)\alpha^L(i)$.
We can use the FPTAS of Lorenz and Raz~\cite{lorenz2001restricted} for this
task. 

The algorithm of Garg and K\"{o}nemann~\cite{garg2007apx} for approximating
maximal multicommodity flow has been improved by Fleischer
~\cite{fleischer2000apx}. The original algorithm computes the shortest path
between every terminal pairs in every iteration. Fleischer divided the
algorithm to phases where she worked with commodities one by one. This way her
algorithm effectively works with approximations of shortest paths while
eliminates the dependency on the number of commodities and still gets a good
approximation ratio.
Using a similar analysis we show that we can work with an
approximation shortest path algorithm to get an FPTAS to otherwise intractable maximum 
$L$-bounded flow problem with general edge lengths.

The structure of the $L$-bounded flow algorithm with general edge lengths stays the same as in the unit edge lengths case.
The only difference is that instead of $y$-shortest $L$-bounded paths, 
approximations of $y$-shortest $L$-bounded paths are used (steps 2 and 5).
\begin{algorithm}
\caption{\textsc{ApproxGeneral}($\eps, \delta, w$)}
\label{algo:approxgeneral}
\begin{algorithmic}[1]
\STATE $i\leftarrow 0, \ y_0(e) \leftarrow \delta \quad \forall e\in E, \ x_0(P)\leftarrow 0 \quad \forall P\in \mathcal{P}_L$
\WHILE{$\bar\alpha^L(i)<1+w$}
\STATE $i\leftarrow i+1$
\STATE $x_i \leftarrow x_{i-1}, y_i \leftarrow y_{i-1}$
\STATE $P \leftarrow (1+w)$-approximation of the $y_i$-shortest $L$-bounded path
\STATE $c \leftarrow \min\limits_{e\in P}c(e)$
\STATE $x_i(P)\leftarrow x_i(P)+c$
\STATE $y_i(e)\leftarrow y_{i}(e)(1+\eps c/c(e)) \quad \forall e\in P$
\ENDWHILE
\RETURN $x_i$
\end{algorithmic}
\end{algorithm}

The analysis of the algorithm follows the same steps as the analysis of
Algorithm~\ref{algo:approx} but one has to be more careful when dealing with
the lengths.

As in the previous subsection, let $f_i$ denote the size of the
flow after $i$ iterations 
and let $\tau$
denote the total number of iterations.
Due to the lack of space, the proofs are given in the Appendix.

\begin{lemma}
\label{lem:feasibleflowgeneral}
The flow $x_{\tau}$ scaled down by a factor of $\log_{1+\eps}\frac{(1+\eps)(1+w)}{\delta}$
is a feasible $L$-bounded flow.
\end{lemma}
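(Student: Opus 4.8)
The plan is to follow the proof of Lemma~\ref{lem:feasibleflow} almost verbatim, isolating the single place where the weaker stopping criterion of \textsc{ApproxGeneral} changes the bookkeeping. As in the unit-length case, each $x_i$ is by construction supported on paths from $\mathcal{P}_L$, hence is an $L$-bounded flow; so the only thing to verify is that $x_{\tau}$ scaled down by $\log_{1+\eps}\frac{(1+\eps)(1+w)}{\delta}$ respects all edge capacities.

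First I would prove the upper bound $y_{\tau}(e)<(1+\eps)(1+w)$ for every $e\in E$. An edge's $y$-value is modified only in step~8, i.e.\ only when it lies on the path $P$ selected in that iteration. In iteration $i$ the path $P$ is the $(1+w)$-approximate shortest $L$-bounded path computed with respect to $y_i=y_{i-1}$, so its $y_{i-1}$-length is exactly $\bar\alpha^L(i-1)$, and this is $<1+w$ because the while-condition was satisfied before entering iteration $i$. Consequently every edge $e\in P$ has $y_{i-1}(e)<1+w$, and after the update $y_i(e)=y_{i-1}(e)\bigl(1+\eps c/c(e)\bigr)<(1+w)(1+\eps)$, using $c=\min_{e'\in P}c(e')\le c(e)$; edges not on $P$ are unchanged. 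Since $y_0(e)=\delta<1+w$, induction on $i$ gives $y_i(e)<(1+\eps)(1+w)$ for all $i$ and all $e$, in particular for $i=\tau$.

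Next I would reuse the lower bound from Lemma~\ref{lem:feasibleflow} unchanged: fixing an edge $e$ through which flow is routed in iterations $i_1,\dots,i_r$ carrying amounts $c_1,\dots,c_r$, we have $f_{\tau}(e)=\sum_{j=1}^r c_j$ and $y_{\tau}(e)=\delta\prod_{j=1}^r\bigl(1+\eps c_j/c(e)\bigr)$, and Bernoulli's inequality (applicable since each $c_j\le c(e)$) yields $y_{\tau}(e)\ge\delta(1+\eps)^{f_{\tau}(e)/c(e)}$. Combining the two estimates gives $\delta(1+\eps)^{f_{\tau}(e)/c(e)}<(1+\eps)(1+w)$, i.e.\ $f_{\tau}(e)/c(e)<\log_{1+\eps}\frac{(1+\eps)(1+w)}{\delta}$, which is precisely the claimed scaling factor.

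The only point requiring attention — the "main obstacle", modest as it is — is the first step: one must observe that, although $P$ is merely an approximately shortest path, the quantity controlling $y_{\tau}(e)$ is the $y$-length of $P$ at the moment of selection, namely $\bar\alpha^L(i-1)$, which the modified while-condition keeps below $1+w$; the error parameter $w$ is exactly what inflates the capacity-violation bound from $1+\eps$ to $(1+\eps)(1+w)$, and hence the scaling factor from $\log_{1+\eps}\frac{1+\eps}{\delta}$ to $\log_{1+\eps}\frac{(1+\eps)(1+w)}{\delta}$.
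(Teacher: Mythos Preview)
Your proposal is correct and follows essentially the same approach as the paper: bound $y_{\tau}(e)<(1+\eps)(1+w)$ from above using the fact that the selected path has $y$-length $\bar\alpha^L(i-1)<1+w$ when the while-condition holds, reuse the Bernoulli lower bound $y_{\tau}(e)\ge\delta(1+\eps)^{f_{\tau}(e)/c(e)}$ verbatim from Lemma~\ref{lem:feasibleflow}, and combine. Your write-up is in fact somewhat more careful than the paper's, which states the upper bound rather tersely.
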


\begin{proof}
For every edge $e\in E$ and iteration $i$, as $\bar \alpha^L(i-1)<1+w$, we 
also have $y_{i-1}(e) < 1+w$. By description of the algorithms, this implies 
$y_i(e)< (1+\eps)(1+w)$, and in particular,
\begin{equation}\label{eqn:lengthtuppergeneral}
y_{\tau}(e)<(1+\eps)(1+w) \ .
\end{equation}
Combining this with $y_{\tau}(e)\ge\delta (1+\eps)^{f_{\tau}(e)/c(e)}$
from inequality~(\ref{eqn:lengthtlower}) in previous subsection,
we derive
$$\frac{f_{\tau}(e)}{c(e)}\le\log_{1+\eps}\frac{(1+\eps)(1+w)}{\delta}$$
which completes the proof.
\end{proof}

\begin{claim} \label{lem:alpha2}
For $i=1,\ldots,\tau$,
\begin{align}
\alpha^L(i)& \le\delta L e^{\eps (1+w)f_i/\beta} \ .
\end{align}
\end{claim}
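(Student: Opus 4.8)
The plan is to mirror the proof of Claim~\ref{lem:alpha} essentially line by line, keeping the notation $D(y)=\sum_e c(e)y(e)$, $D(i)=D(y_i)$ and $\beta=\min_y D(y)/d^L_y(s,t)$ (the optimal dual value) introduced there, and isolating the single place where the use of an approximate shortest-path subroutine changes the estimate. First I would set up the recurrence for $D(i)$. In iteration $i$ of \textsc{ApproxGeneral} the path $P_i$ is selected in step~5 after the assignment $y_i\leftarrow y_{i-1}$, so $P_i$ is the $(1+w)$-approximate $y_{i-1}$-shortest $L$-bounded path and therefore $\sum_{e\in P_i}y_{i-1}(e)=\bar\alpha^L(i-1)$; moreover the flow pushed is $c_i=f_i-f_{i-1}$. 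Expanding $D(i)=\sum_e y_i(e)c(e)$ exactly as in Claim~\ref{lem:alpha} gives $D(i)=D(i-1)+\eps(f_i-f_{i-1})\bar\alpha^L(i-1)$, hence $D(i)-D(0)=\eps\sum_{j=1}^i(f_j-f_{j-1})\bar\alpha^L(j-1)$. This is the only step where the approximation enters: we now carry $\bar\alpha^L$ instead of $\alpha^L$.

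Next I would bound $\beta$ as before. Since $D(y_i-y_0)=D(i)-D(0)$ and every path in $\mathcal P_L$ has $(y_i-y_0)$-length at least $\alpha^L(i)-\delta L$ (its $y_i$-length is at least the true shortest-path value $\alpha^L(i)$, and its $y_0$-length is $\delta$ times its number of edges, which is at most $\delta L$), we get $d^L_{y_i-y_0}(s,t)\ge\alpha^L(i)-\delta L$ and therefore $\beta\le (D(i)-D(0))/(\alpha^L(i)-\delta L)$. Substituting the recurrence and using the approximation guarantee $\bar\alpha^L(j-1)\le(1+w)\alpha^L(j-1)$ yields $\alpha^L(i)\le\delta L+\frac{\eps(1+w)}{\beta}\sum_{j=1}^i(f_j-f_{j-1})\alpha^L(j-1)$.

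Finally I would solve this recurrence the same way as in Claim~\ref{lem:alpha}: define $z(0)=\alpha^L(0)$ (which is at most $\delta L$, as the shortest $L$-bounded path has at most $L$ edges each of $y_0$-length $\delta$) and $z(i)=\delta L+\frac{\eps(1+w)}{\beta}\sum_{j=1}^i(f_j-f_{j-1})z(j-1)$, so that $\alpha^L(i)\le z(i)$ by induction, and then $z(i)=z(i-1)\bigl(1+\eps(1+w)(f_i-f_{i-1})/\beta\bigr)\le z(i-1)e^{\eps(1+w)(f_i-f_{i-1})/\beta}$; telescoping from $z(0)\le\delta L$ with $f_0=0$ gives $z(i)\le\delta L e^{\eps(1+w)f_i/\beta}$, hence the claim. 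I do not expect a real obstacle here — it is pure bookkeeping — the one point to be careful about is that the lower bound $d^L_{y_i-y_0}(s,t)\ge\alpha^L(i)-\delta L$ must keep the \emph{exact} shortest-path value $\alpha^L(i)$ (which is what the claim is stated in terms of), so that the approximation factor $(1+w)$ is absorbed exclusively through the $D(i)$ recurrence; note in particular that the reverse inequality $\bar\alpha^L(i)\ge\alpha^L(i)$ is not needed for this claim.
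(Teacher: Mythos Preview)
Your proposal is correct and follows the same route as the paper's proof: derive the recurrence $D(i)-D(0)=\eps\sum_{j}(f_j-f_{j-1})\bar\alpha^L(j-1)$, combine it with $\beta\le (D(i)-D(0))/(\alpha^L(i)-\delta L)$, replace $\bar\alpha^L$ by $(1+w)\alpha^L$, and solve the resulting recursion exactly as in Claim~\ref{lem:alpha} with $\eps/\beta$ replaced by $(1+w)\eps/\beta$. The only cosmetic remark is that the equality $\sum_{e\in P_i}y_{i-1}(e)=\bar\alpha^L(i-1)$ presumes the approximate shortest-path oracle is deterministic; all that is actually needed (and all the paper uses) is the inequality $\sum_{e\in P_i}y_{i-1}(e)\le(1+w)\alpha^L(i-1)$, which you invoke anyway.
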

\begin{proof}
By the same reasoning as in the proof of Claim~\ref{lem:alpha}, we obtain
\begin{equation}\label{eqn:digeneral}
D(i) \le D(0)+\eps\sum_{j=1}^i(f_j-f_{j-1})(1+w)\alpha^L(i-1) \ ,
\end{equation}
where the extra $1+w$ factors stems from the fact that we work, in iteration $i$, 
not with a path of length $\alpha(i)$ but with a path of length 
$\bar\alpha(i)\leq(1+w)\alpha(i)$.
Combining this with $\beta\le\frac{D(i)-D(0)}{\alpha^L(i)-\delta
L}$ from inequality~(\ref{eqn:beta}), we obtain
$$\alpha^L(i)\le \delta L +\frac{\eps(1+w)}{\beta}\sum_{j=1}^i(f_j-f_{j-1})\alpha^L(j-1) \ .$$

From this point, we proceed again along the same lines as in the proof of 
Claim~\ref{lem:alpha} (the only difference is that instead of $\epsilon/\beta$,
we work now with $(1+w)\epsilon/\beta$) and get the desired bound.
\end{proof}

\begin{theorem} \label{thm:gen}
There is an algorithm that computes an $(1+\eps)$-approximation to the
maximum $L$-bounded flow in a graph with general edge lengths in time
$\O(\frac{m^2n}{\eps^2}\log L(\log\log n + \frac{1}{\eps}))$. 
\end{theorem}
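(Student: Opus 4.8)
The plan is to mirror the proof of Theorem~\ref{thm:approxunit} step by step, tracking the extra $(1+w)$ factors that appear because we only have an approximate restricted-shortest-path oracle, and then choosing $w$ and $\delta$ so that the final approximation ratio is $(1+\eps)$. First I would set up the approximation ratio exactly as before: let
\[
\gamma = \frac{\beta\,\log_{1+\eps}\frac{(1+\eps)(1+w)}{\delta}}{f_\tau},
\]
using the scaling factor from Lemma~\ref{lem:feasibleflowgeneral} in place of the one from Lemma~\ref{lem:feasibleflow}. Then, from the stopping condition $\bar\alpha^L(\tau)\ge 1+w$ together with $\bar\alpha^L(\tau)\le(1+w)\alpha^L(\tau)$, I get $\alpha^L(\tau)\ge 1$, and Claim~\ref{lem:alpha2} gives $1\le \delta L e^{\eps(1+w)f_\tau/\beta}$, hence
\[
\frac{\beta}{f_\tau}\le\frac{\eps(1+w)}{\log\frac1{\delta L}}.
\]
Substituting into $\gamma$ and expanding $\log_{1+\eps}$ yields
\[
\gamma\le\frac{\eps(1+w)}{\log(1+\eps)}\cdot\frac{\log\frac{(1+\eps)(1+w)}{\delta}}{\log\frac1{\delta L}}.
\]

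Next I would choose the parameters. As in the unit-length proof, set $\delta=\frac{(1+\eps)(1+w)}{((1+\eps)(1+w)L)^{1/\eps}}$ so that $\frac{\log\frac{(1+\eps)(1+w)}{\delta}}{\log\frac1{\delta L}}=\frac{1}{1-\eps}$ exactly (the algebra is identical to the unit case with $(1+\eps)$ replaced by $(1+\eps)(1+w)$). This gives $\gamma\le\frac{\eps(1+w)}{(1-\eps)\log(1+\eps)}$. Using $\log(1+\eps)\ge\eps-\eps^2/2$ for $\eps<1$, for $\eps<\frac13$ this is at most $\frac{1+w}{1-\frac32\eps}\le (1+w)(1+3\eps)$. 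Finally, choosing $w=\eps'$ and running the algorithm with a rescaled $\eps'=\Theta(\eps)$ (e.g. replace $\eps$ by $\eps/C$ and $w$ by $\eps/C$ for a suitable absolute constant $C$, then $(1+w)(1+3\eps/C)\le 1+\eps$) gives the claimed $(1+\eps)$-approximation. Feasibility of the scaled flow is exactly Lemma~\ref{lem:feasibleflowgeneral}.

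It remains to bound the running time. The iteration count argument is unchanged: each edge can be the minimum-capacity edge on the chosen path at most $\lceil\log_{1+\eps}\frac{(1+\eps)(1+w)}{\delta}\rceil=\O(\frac1\eps\log_{1+\eps}L)=\O(\frac1{\eps^2}\log L)$ times, so the total number of iterations is $\O(\frac{m}{\eps^2}\log L)$. The difference is the cost per iteration: instead of a Dijkstra-type dynamic program we invoke the Lorenz--Raz FPTAS~\cite{lorenz2001restricted} for restricted shortest paths with error parameter $w=\Theta(\eps)$, which runs in time $\O(mn(\log\log n+1/w))=\O(mn(\log\log n+1/\eps))$. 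Multiplying the two factors gives the stated bound $\O(\frac{m^2 n}{\eps^2}\log L(\log\log n+\frac1\eps))$.

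The main obstacle, and the only place the argument is genuinely more delicate than the unit-length case, is making sure the $(1+w)$ slack from the approximate oracle enters the estimates \emph{multiplicatively and only in controllable places}: it appears once in the scaling factor (via Lemma~\ref{lem:feasibleflowgeneral}), once in the exponent of Claim~\ref{lem:alpha2}, and once through the weakened stopping condition. One has to check that these do not compound — in particular that the derivation of $\gamma$ still telescopes cleanly with $(1+\eps)$ replaced by $(1+\eps)(1+w)$ in the definition of $\delta$ — and that setting $w$ proportional to $\eps$ keeps the product $(1+w)(1+O(\eps))$ below $1+\eps$ after the final rescaling. Everything else is a routine transcription of the proof of Theorem~\ref{thm:approxunit}.
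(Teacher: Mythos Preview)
Your proposal is correct and follows essentially the same route as the paper: the same definition of $\gamma$, the same use of the stopping condition together with Claim~\ref{lem:alpha2}, the same choice of $\delta$ to make the logarithmic ratio equal to $1/(1-\eps)$, and the same iteration-count and per-iteration cost analysis. The only cosmetic difference is that the paper fixes $w=\eps$ to obtain $\gamma\le 1+5\eps$ and then rescales, whereas you leave the constant $C$ implicit; the arguments are otherwise identical.
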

\begin{proof}
We show that for every $\eps\le\frac13$ there are constants
$\delta$ and~$w$ such that $x_{\tau}$, the output of
\textsc{ApproxGeneral}($\eps, \delta, w$), scaled down by
$\log_{1+\eps}\frac{(1+\eps)(1+w)}{\delta}$ as in
Lemma~\ref{lem:feasibleflowgeneral}, is a $(1+5\eps)$-approximation to the
maximum $L$-bounded flow with general capacities; the theorem easily
follows. 

Let $\gamma$ denote the approximation ratio of such an algorithm, that is,
let $\gamma$ denote the ratio of the optimal dual solution ($\beta$) to
the appropriately scaled output of \textsc{ApproxGeneral}($\eps,\delta,w$),
\begin{equation}
\gamma  = \frac{\beta \log_{1+\epsilon}\frac{(1+\epsilon)(1+w)}{\delta}}{f_{\tau}} \ ,
\end{equation}
where the constants $\delta$ and $w$ will be specified later.

By the stopping condition of the while cycle we have 
$1+w \leq \bar\alpha^L(\tau) \leq (1+w)\alpha^L(\tau)$, that is, $1\leq \alpha^L(\tau)$; 
combining it with Claim~\ref{lem:alpha2}, we get
$$\frac{\beta}{f_{\tau}}\le\frac{\eps(1+w)}{\log\frac{1}{\delta L }}.$$
Plugging this bound in the equality for the approximation ratio $\gamma$,
we obtain
\begin{equation} \label{eqn:gamma}
\gamma\le
\frac{\eps(1+w)\log_{1+\eps}\frac{(1+\eps)(1+w)}{\delta}}{\log\frac{1}{\delta
L}}=\frac{\eps(1+w)}{\log(1+\eps)}\frac{\log\frac{(1+\eps)(1+w)}{\delta}}{
\log\frac{1}{\delta L}}\ .
\end{equation}
Setting $\delta=\frac{(1+\eps)(1+w)}{((1+\eps)(1+w)L)^{1/\eps}}$
yields
\begin{equation}
\frac{\log\frac{(1+\eps)(1+w)}{\delta}}{
\log\frac{1}{\delta L}} = \frac{\frac{1}{\eps}\log((1+\eps)(1+w)L)}{
\left(\frac1\eps-1\right)\log((1+\eps)(1+w)L)}=\frac1{1-\eps}\ .
\end{equation}
Thus, the bound on the approximation ratio $\gamma$~(\ref{eqn:gamma}) simplifies to 
$$\gamma\le\frac{\eps(1+w)}{(1-\eps)\log(1+\eps)}\le
\frac{\eps(1+w)}{(1-\eps)(\eps-\frac{\eps^2}{2})}\le
\frac{1+w}{1-\frac{3}{2}\eps}\ ,$$
where the second inequality follows from the Taylor expansion of $\log(1+\eps)$ 
and the bound $\log(1+\eps)\ge\eps-\frac{\eps^2}{2}$, for $\eps<1$. 
By setting $w=\eps$, for $\eps\le\frac{1}{3}$ we get the promised bound
$$\gamma\le\frac{1+w}{1-\frac{3}{2}\eps}\leq (1+\eps)(1+3\eps)\leq1+5\eps \ .$$

Concerning the running time, we observe that in
every iteration the length of at least one edge gets increased by the ratio
$1+\eps$. For every edge $e\in E$ we have $y_{\tau}(e)\le(1+\eps)(1+w)$. By
the same arguments as in the previous subsection, our choice of the parameters
ensures that the total number of iterations is at most
$\O(\frac{m}{\eps}\log_{1+\eps}L)=\O(\frac{m}{\eps^2}\log L)$.
The FPTAS approximating the resource bounded shortest path takes time
$\O(mn(\log\log n+\frac{1}{\eps}))$. Combining these two bounds completes 
the proof.
\end{proof}

We note that the exponential length method can be used for many fractional
packing problems and using the same technique we could get an approximation
algorithm for maximum multicommodity $L$-bounded flow.

\bibliographystyle{abbrv}
\bibliography{bibliography}

\begin{thebibliography}{10}

\bibitem{adamek1971remarks}
J.~Ad{\'a}mek and V.~Koubek.
\newblock Remarks on flows in network with short paths.
\newblock {\em Comment. Math. Univ. Carolin.}, 12(4):661--667, 1971.

\bibitem{Altmanova}
K.~Altmanov\'{a}.
\newblock {\em Toky cestami omezen\'e d\'elky}.
\newblock Bachelor's thesis, Charles University, Faculty of Mathematics and
  Physics, Department of Applied Mathematics, 2018. In Czech.

\bibitem{baier2003flows}
G.~Baier.
\newblock {\em Flows with path restrictions}.
\newblock PhD thesis, TU Berlin, 2003.

\bibitem{baier2010length}
G.~Baier, T.~Erlebach, A.~Hall, E.~K\"{o}hler, P.~Kolman, O.~Pangr\'{a}c,
  H.~Schilling, and M.~Skutella.
\newblock Length-bounded cuts and flows.
\newblock {\em ACM Trans. Algorithms}, 7(1):4:1--4:27, 2010.

\bibitem{Bazgan2018AMF}
C.~Bazgan, T.~Fluschnik, A.~Nichterlein, R.~Niedermeier, and M.~Stahlberg.
\newblock {A} {M}ore {F}ine-{G}rained {C}omplexity {A}nalysis of {F}inding the
  {M}ost {V}ital {E}dges for {U}ndirected {S}hortest {P}aths.
\newblock {\em CoRR}, abs/1804.09155, 2018.

\bibitem{Bley:03}
A.~Bley.
\newblock On the complexity of vertex-disjoint length-restricted path problems.
\newblock {\em Computational Complexity}, 12(3-4):131--149, 2003.

\bibitem{BleyNeto}
A.~Bley and J.~Neto.
\newblock Approximability of 3- and 4-hop bounded disjoint paths problems.
\newblock In F.~Eisenbrand and F.~B. Shepherd, editors, {\em Integer
  Programming and Combinatorial Optimization}, pages 205--218, Berlin,
  Heidelberg, 2010. Springer Berlin Heidelberg.

\bibitem{Dvork2015ParametrizedCO}
P.~Dvo{\v{r}}{\'a}k and D.~Knop.
\newblock Parametrized complexity of length-bounded cuts and multi-cuts.
\newblock In {\em Theory and Applications of Models of Computation}, pages
  441--452. Springer, 2015.

\bibitem{fleischer2000apx}
L.~K. Fleischer.
\newblock Approximating fractional multicommodity flow independent of the
  number of commodities.
\newblock {\em SIAM J. Discret. Math.}, 13(4):505--520, 2000.

\bibitem{FHNN:15}
T.~Fluschnik, D.~Hermelin, A.~Nichterlein, and R.~Niedermeier.
\newblock Fractals for kernelization lower bounds, with an application to
  length-bounded cut problems.
\newblock {\em CoRR}, abs/1512.00333, 2015.

\bibitem{garg2007apx}
N.~Garg and J.~K\"{o}nemann.
\newblock Faster and simpler algorithms for multicommodity flow and other
  fractional packing problems.
\newblock {\em SIAM J. Comput.}, 37(2):630--652, 2007.

\bibitem{Golovach2009PathsOB}
P.~A. Golovach and D.~M. Thilikos.
\newblock Paths of bounded length and their cuts: Parameterized complexity and
  algorithms.
\newblock In {\em International Symposium on Parameterized and Exact
  Computation}, 2009.

\bibitem{HZ:80}
G.~Y. Handler and I.~Zang.
\newblock A dual algorithm for the constrained shortest path problem.
\newblock {\em Networks}, 10:293--310, 1980.

\bibitem{hassin1992restricted}
R.~Hassin.
\newblock Approximation schemes for the restricted shortest path problem.
\newblock {\em Math. Oper. Res.}, 17(1):36--42, 1992.

\bibitem{itai1982complexity}
A.~Itai, Y.~Perl, and Y.~Shiloach.
\newblock The complexity of finding maximum disjoint paths with length
  constraints.
\newblock {\em Networks}, 12(3):277--286, 1982.

\bibitem{Kolman2018OnAE}
P.~Kolman.
\newblock On algorithms employing treewidth for ${L}$-bounded cut problems.
\newblock {\em J. Graph Algorithms Appl.}, 22:177--191, 2018.

\bibitem{kolman2006improved}
P.~Kolman and C.~Scheideler.
\newblock Improved bounds for the unsplittable flow problem.
\newblock {\em J. Algorithms}, 61(1):20--44, 2006.

\bibitem{koubek1981maximum}
V.~Koubek and A.~{\v{R}}{\'\i}ha.
\newblock The maximum k-flow in a network.
\newblock In {\em Mathematical Foundations of Computer Science 1981}, pages
  389--397. Springer, 1981.

\bibitem{Lee2017ImprovedHF}
E.~Lee.
\newblock Improved hardness for cut, interdiction, and firefighter problems.
\newblock In {\em International Colloquium on Automata, Languages, and
  Programming}, 2017.

\bibitem{lorenz2001restricted}
D.~H. Lorenz and D.~Raz.
\newblock A simple efficient approximation scheme for the restricted shortest
  path problem.
\newblock {\em Oper. Res. Lett.}, 28(5):213--219, 2001.

\bibitem{mahjoub2010max}
R.~A. Mahjoub and T.~S. McCormick.
\newblock Max flow and min cut with bounded-length paths: complexity,
  algorithms, and approximation.
\newblock {\em Math. Program.}, 124(1-2):271--284, 2010.

\bibitem{Schieber:1995:CFM:241577}
B.~Schieber, A.~Bar-Noy, and S.~Khuller.
\newblock {T}he {C}omplexity of {F}inding {M}ost {V}ital {A}rcs and {N}odes.
\newblock Technical report, College Park, MD, USA, 1995.

\bibitem{Vobornik}
J.~Voborn\'{\i}k.
\newblock {\em Algorithms for ${L}$-bounded flows}.
\newblock Master's thesis, Charles University, Faculty of Mathematics and
  Physics, Department of Applied Mathematics, 2016.

\bibitem{ZFMN:17}
P.~{Zschoche}, T.~{Fluschnik}, H.~{Molter}, and R.~{Niedermeier}.
\newblock {The Computational Complexity of Finding Separators in Temporal
  Graphs}.
\newblock {\em ArXiv e-prints}, Nov. 2017.

\end{thebibliography}

\end{document}